\documentclass[letterpaper,11pt,oneside]{article}

\usepackage{amsmath,amssymb}
\usepackage{fancybox}
\newcommand {\ignore} [1] {}

\usepackage{epsfig}
\usepackage{wrapfig}

\abovedisplayskip=1mm
\belowdisplayskip=1mm
\abovedisplayshortskip=0mm
\belowdisplayshortskip=1mm
\normalbaselineskip=12pt
\normalbaselines
\baselineskip=12pt
\setlength{\textheight}{9.0in}
\setlength{\topmargin}{-0.5in}
\setlength{\headheight}{0.2in}
\setlength{\oddsidemargin}{-0.1in}
\setlength{\evensidemargin}{-0.1in}

\setlength{\textwidth}{6.55in}

\newtheorem{theorem}{Theorem}

\newtheorem{fact}{Fact}

\newenvironment{proof}{\ni {\em Proof:}}{\hfill $\Box$ \bigbreak}

\newcommand{\junk}[1]{}
\junk{

}

\newcommand{\R}{\ensuremath{\mathbb{R}}} 

\newcommand{\ve}{\varepsilon}

\def \ni {\noindent}


\DeclareMathOperator{\argmin}{argmin}


\begin{document}
\setlength{\parskip}{1.5ex plus0.5ex minus0.5ex}
\setlength{\parindent}{0em}

\title{Improved Hardness of Approximation for Stackelberg Shortest-Path Pricing}

\author{Patrick Briest \thanks{Department of Computer Science, University of Paderborn, Germany. {\tt
     patrick.briest@upb.de}. Work done while the author was staying at Cornell University, supported by a scholarship of the German Academic Exchange Service (DAAD).}\and Sanjeev Khanna
 \thanks{Department of Computer and Information Science, University of Pennsylvania,
   USA. {\tt sanjeev@cis.upenn.edu}.}}

\date{}

\maketitle

\begin{abstract}
We consider the {\em Stackelberg shortest-path pricing problem}, which is defined as follows. Given a graph $G$ with fixed-cost and pricable edges and two distinct vertices $s$ and $t$, we may assign prices to the pricable edges. Based on the predefined fixed costs and our prices, a customer purchases a cheapest $s$-$t$-path in $G$ and we receive payment equal to the sum of prices of pricable edges belonging to the path. Our goal is to find prices maximizing the payment received from the customer. While Stackelberg shortest-path pricing was known to be APX-hard before, we provide the first explicit approximation threshold and prove hardness of approximation within $2-o(1)$.
\end{abstract}

\section{Introduction}
\label{introduction}
The notion of {\em algorithmic pricing} encompasses a wide range of optimization problems aiming to assign revenue-maximizing prices to some fixed set of items given information about the valuation functions of potential customers \cite{Aggarwal04,Guruswami05}. In a line of recent work the approximation complexity of this kind of problem has received considerable attention.

Without supply constraints, the very simple {\em single-price algorithm}, which reduces the search to the one-dimensional subspace of pricings assigning identical prices to all the items, achieves an approximation guarantee of $\mathcal{O}(\log n + \log m)$, where $n$ and $m$ denote the number of item types and customers, respectively \cite{Balcan08,BHK08}. Corresponding hardness of approximation results of $\Omega (\log ^{\ve} m)$ for some $\ve >0$ are known to hold (under different complexity theoretic assumptions) even in the special cases that valuation functions are {\em single-minded} (items are {\em strict complements}) \cite{Demaine06+} or {\em unit-demand} (items are {\em strict substitutes}) \cite{B08,BK07,Chuzhoy+07}. In these cases, it is the potentially conflicting nature of different customers' valuations that constitutes the combinatorial difficulty of multi-dimensional pricing.

Another line of research has been considering so-called {\em Stackelberg pricing} problems \cite{Stackelberg34}, in which valuation functions are expressed implicitly in terms of some optimization problem. More formally, we are given a set of items, each of which has some fixed cost associated with it. In addition to these fixed costs, we may assign prices to a subset of the items. Given both fixed costs and prices, a single customer will purchase a min-cost subset of items subject to some feasibility constraints and we receive payment equal to the prices assigned to items purchased by the customer. As an example, we may think of items as being the edges of a graph and a customer aiming to buy a min-cost spanning tree, cheapest path, etc.

Clearly, as there is only a single customer in this type of problem, conflicting valuation functions can no longer pose a barrier for the design of efficient pricing algorithms. Yet, many Stackelberg pricing problems - and in particular the aforementioned spanning tree and shortest path versions - have so far resisted all attempts at improving over the single-price algorithm's logarithmic approximation guarantee. However, the best known hardness results to date only prove APX-hardness of both the spanning tree \cite{Cardinal07} and shortest path \cite{Joret08} cases without deriving explicit constants.

In this paper, we present the first explicit hardness of approximation result for the shortest path version of Stackelberg pricing, which we show to be hard to approximate within a factor of $2-o(1)$. The result is based on a novel analysis of reduction that is quite similar to the ones previously described in \cite{Joret08} and \cite{Roch05}.

\subsection{Preliminaries}
\label{preliminaries}
In the {\em Stackelberg shortest-path pricing problem} ({\sc StackSP}), we are given a directed graph $G=(V,A)$, a cost function $c:A\to \R_0 ^+$, a distinguished set of {\em pricable edges} $\mathcal{P}\subset A$, $|\mathcal{P}|=m$, and two distinguished nodes $s,t\in V$. We may assign prices $p:\mathcal{P}\to \R _0^+$ to the pricable edges. Given these prices, a consumer will purchase a shortest directed $s$-$t$-path $P^*$ in $G$, i.e.,\[
P^* =\argmin \left\{ \sum _{e\in P}(c(e)+p(e))\, |\, P \mbox{ is $s$-$t$-path} \right\},\]
and we receive revenue $rev(p)=\sum _{e\in P^*}p(e)$. We want to find a price assignment $p$ maximizing $rev(p)$.

Throughout the rest of this paper, we will w.l.o.g. only consider {\sc StackSP} instances for which $c(e)=0$ for all $e\in \mathcal{P}$, i.e., every edge is either pricable or fixed-cost, but never both.

\section{Hardness of Approximation}
\label{hardness}
We are going to show that {\sc StackSP} is quasi-NP-hard to approximate within a factor of $2-o(1)$. The result is obtained by a refined analysis of a construction very similar to the one used previously in \cite{Joret08} and \cite{Roch05}.

\begin{theorem}
\label{t:pathPricing}
{\sc StackSP} cannot be approximated in polynomial time within a factor of $2-2^{-\Omega(\log ^{1-\ve}m)}$ for any $\ve >0$, unless NP $\subseteq$ DTIME($n^{\mathcal{O}(\log n)}$).
\end{theorem}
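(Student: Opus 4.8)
The plan is to reduce from the \emph{Label Cover} problem. I will use the standard consequence of the PCP theorem and Raz's parallel repetition theorem that, for every $\ve>0$, there is a family of Label Cover instances — each consisting of a bipartite constraint graph, a label set $[\ell]$, and one projection constraint per edge — for which no polynomial-time algorithm distinguishes the fully satisfiable instances from those in which every labeling satisfies at most a $1/g$ fraction of the constraints, where $g=2^{\Omega(\log^{1-\ve}N)}$ for instance size $N$, unless NP $\subseteq$ DTIME$(n^{O(\log n)})$. From such an instance $\Phi$ I will construct, in polynomial time, a {\sc StackSP} instance $G_\Phi$ whose optimal revenue is at least $(2-o(1))k$ if $\Phi$ is satisfiable and at most $(1+O(1/g))k$ if it is not, for a scaling parameter $k$. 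Choosing the parameters of $\Phi$ so that the resulting number $m$ of pricable edges satisfies $g=2^{\Omega(\log^{1-\ve}m)}$ — which one checks is possible for every $\ve>0$ under the stated assumption — the theorem follows, since a polynomial-time algorithm approximating {\sc StackSP} within $2-2^{-\Omega(\log^{1-\ve}m)}$ would decide the Label Cover dichotomy.

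The construction, which follows the blueprint of \cite{Joret08,Roch05}, has (roughly) the following shape. Every $s$-$t$-path of $G_\Phi$ runs in two phases: a first phase that, for each vertex $a$ on one side of $\Phi$'s constraint graph, forces the path through exactly one edge of a parallel bundle of $\ell$ pricable edges — thereby committing to a label of $a$ — and a second phase doing the same for the vertices $b$ on the other side. Interleaved with the bundles are short fixed-cost \emph{guard} paths whose lengths are calibrated so that a path whose accumulated price has exceeded a phase's natural budget is overtaken by a guard and abandoned by the customer, and so that a path can carry its full first-phase price through the entire second phase — without being cut off — only if the labels it has selected on the two sides are \emph{consistent} on every constraint it meets; the constraints are enforced by making the second-phase bundles re-use first-phase bundle edges in a way that respects the projections. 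After scaling so that each phase's budget is $k$, an honest consistent path collects revenue $2k$, whereas every path is capped at about $k$ per phase.

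For completeness, if $\Phi$ has a satisfying labeling $\sigma$, price the bundle edges selected by $\sigma$ just below the relevant guard thresholds and price all other bundle edges prohibitively high; then the unique cheapest $s$-$t$-path is the $\sigma$-path, it is never overtaken by a guard (keeping these inequalities strict may cost an $o(1)$ fraction of the revenue at each guard, but the losses sum to $o(k)$), and it collects revenue $(2-o(1))k$. Soundness is the heart of the argument and is exactly where the refined analysis is needed — a cruder analysis of the same construction only yields APX-hardness. Let $\Phi$ be a ``no'' instance, let $p$ be an optimal pricing with induced cheapest path $P^*$, and write $rev(p)=\sum_{e\in P^*}p(e)$. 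Distribute $rev(p)$ among the guard positions that $P^*$ passes; since $P^*$ is a \emph{cheapest} path, at each such position its accumulated cost is at most that guard's length, which bounds the revenue accumulated so far. Propagating these bounds phase by phase and constraint by constraint shows that the part of the second-phase revenue $P^*$ can carry \emph{on top of} its first-phase revenue is, up to lower-order terms, proportional to the fraction of constraints on which the labels selected by $P^*$ are consistent. Hence, decoding $P^*$ into the labeling it selects, $rev(p)>(1+O(1/g))k$ forces that labeling to satisfy more than a $1/g$ fraction of the constraints — impossible in the ``no'' case. The quantitative point that fixes the constant at exactly $2$ is that a violated constraint makes some guard strictly cheaper unless the second-phase contribution of the affected bundle edges is essentially zeroed out, so that any pricing violating a constant fraction of constraints realizes, in essence, only one phase's worth of revenue, $k$, against the two phases' worth, $2k$, available under satisfiability.

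The main obstacle is precisely this soundness analysis. Unlike in the completeness direction, one cannot assume the customer's path is honest, so the argument must handle every way a clever price vector could let the customer weave between bundles, straddle guard and bundle edges, or exploit ties — and it must bound the revenue lost along each such deviation tightly enough that the final estimate lands at $k(1+o(1))$ rather than at some unspecified constant below $2k$. Carrying out this accounting so that it is simultaneously tight and uniform over all pricings is the delicate part of the proof, and is the step that \cite{Joret08,Roch05} stopped short of, which is why they obtained only APX-hardness.
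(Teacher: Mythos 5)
Your proposal identifies the correct source problem (Label Cover with the Raz/PCP-based hardness gap) and the correct shape of the conclusion (completeness near $2k$, soundness near $k$), but it diverges from the paper's actual argument in two places, and the second one is a genuine gap rather than an alternative route.

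First, the construction you describe is not the one the paper uses. You describe a two-phase, per-vertex design: one bundle of $\ell$ parallel pricable edges for each left vertex, then one for each right vertex, with ``guard paths'' of calibrated length enforcing budgets, and constraint-consistency enforced by having ``second-phase bundles re-use first-phase bundle edges.'' In a directed $s$--$t$ instance the re-use idea is not well defined as stated -- a path cannot revisit an edge already traversed -- and you never specify the guard lengths, so it is not clear this gadget even composes into a valid {\sc StackSP} instance. The paper instead builds one gadget per \emph{edge} $(v_i,w_i)$ of the Label Cover graph, with one pricable edge per satisfying label pair $(\kappa,\lambda)\in R_i$ plus a parallel fixed-cost edge of cost $2$; the $m$ gadgets are chained in sequence, and for $i<j$ a \emph{shortcut} edge of fixed cost exactly $j-i-1$ connects any two pricable edges encoding conflicting assignments. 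Those explicit shortcut lengths are not a cosmetic detail -- they are precisely what the soundness argument needs.

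Second, and more importantly, your soundness argument is a description of what ought to happen rather than a proof. Phrases like ``distribute $rev(p)$ among the guard positions,'' ``propagating these bounds phase by phase,'' and ``any pricing violating a constant fraction of constraints realizes, in essence, only one phase's worth of revenue'' name the desired conclusion but do not supply the accounting that establishes it uniformly over all pricings -- which you yourself flag as the delicate part. The paper closes this exactly where you stop. It partitions the gadgets containing the purchased pricable edges (the ``$P$-edges'') into \emph{islands}: starting from the first such gadget $\sigma_1$, repeatedly jump to the farthest later gadget reachable from $\sigma_i$'s $P$-edge by a shortcut; when no such jump exists, close the island and start the next one at the next gadget with a $P$-edge. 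Within an island $\sigma_\alpha,\dots,\sigma_\omega$, comparing $P$ against the single shortcut from $\sigma_i$ to $\sigma_{i+1}$ gives $r_i\le \ell_i - out_i - in_{i+1}$, and comparing $P$ against the all-fixed-cost bypass of gadget $\sigma_i$ gives $p_{\sigma_i}\le 2 + in_i + out_i$; adding and telescoping over the island yields $\sum_i (p_{\sigma_i}+r_i)\le 2(\omega-\alpha) + \sum_i \ell_i$. Summing over islands and using that the shortcuts are disjoint (so $\sum\ell_i\le m$) bounds total revenue by $2(r-1)+m$ where $r$ is the number of significant gadgets; hence revenue $m+c$ forces $r\ge c/2+1$, and taking every other significant gadget gives a mutually consistent partial labeling satisfying $\lceil r/2\rceil\ge c/4$ Label Cover edges. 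None of this has a counterpart in your proposal, and it is exactly the refinement the theorem's stated factor depends on. To repair the proposal you would need either to adopt the paper's edge-gadget/shortcut construction and the island decomposition, or to make your guard-calibrated construction fully explicit and supply a comparably tight, pricing-uniform charging argument for it.
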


\subsection{Proof of Theorem \ref{t:pathPricing}}
\label{proof}
The proof of the Theorem is based on a reduction from the {\em label cover problem} ({\sc LabelCover}), which is defined as follows. Given a bipartite graph $G=(V,W,E)$, a set $L=\{ 1,\ldots ,k\}$ of {\em labels} and a set $R_{(v,w)}\subseteq L\times L$ of satisfying label combinations for every edge $(v,w)\in E$, we want to find a label assignment $\ell :V \cup W\to L$ to the vertices of $G$ satisfying the maximum possible number of edges, i.e., edges $(v,w)$ with $(\ell (v),\ell (w))\in R_{(v,w)}$. The following hardness result for {\sc LabelCover}, which is an easy
consequence of the PCP theorem \cite{Arora+98} combined with Raz' parallel repetition theorem \cite{Raz98}, is found, e.g., in the survey by Arora and Lund \cite{Arora+96}.

\begin{theorem}
\label{t:labelCover}
For {\sc LabelCover} on graphs with $n$ vertices, $m$ edges and label set of size $k=\mathcal{O}(n)$ there exists no polynomial time algorithm to decide whether the maximum number of satisfiable edges is $m$ or at most $m/2^{\log ^{1-\ve}m}$ for any $\ve >0$, unless NP $\subseteq$ DTIME($n^{\mathcal{O}(\log n)}$).
\end{theorem}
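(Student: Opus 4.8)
The plan is to assemble the statement from the two ingredients cited just before it --- the PCP theorem \cite{Arora+98} and Raz's parallel repetition theorem \cite{Raz98} --- starting from a gap version of $3$SAT. By the PCP theorem there is an absolute constant $\varepsilon_0>0$ such that, given a $3$-CNF formula $\phi$ with $N$ variables and $O(N)$ clauses, each on three distinct variables, it is NP-hard to distinguish the case that $\phi$ is satisfiable from the case that no assignment satisfies more than a $(1-\varepsilon_0)$ fraction of its clauses.

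First I would write down the standard clause--variable two-prover game as a {\sc LabelCover} instance $G_0$: put the clauses on the $V$-side and the variables on the $W$-side, join $C$ and $x$ by an edge whenever $x$ occurs in $C$, let the labels of a clause be its seven satisfying local assignments and the labels of a variable be $\{0,1\}$, and let $R_{(C,x)}$ be the pairs whose values agree on $x$. If $\phi$ is satisfiable, the labeling induced by a satisfying assignment satisfies every edge. Otherwise, fix any labeling and read its $W$-labels as a global assignment $\sigma$; since $\sigma$ violates at least an $\varepsilon_0$ fraction of the clauses, and the $V$-label of a violated clause is a satisfying local assignment, hence disagrees with $\sigma$ on one of that clause's three variables, at least one of the three edges at each violated clause is unsatisfied, so at most a $(1-\varepsilon_0/3)$ fraction of the edges is satisfied. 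Thus $G_0$ has $O(N)$ vertices and edges and a constant-size label set, with value $1$ in the YES case and value at most $1-\varepsilon_1$, where $\varepsilon_1:=\varepsilon_0/3$, in the NO case.

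Next I would amplify the gap by parallel repetition. For a parameter $\ell$ fixed below, the $\ell$-fold repetition $G_0^{\otimes\ell}$ is again a {\sc LabelCover} instance: its sides are $V^\ell$ and $W^\ell$, its edges are the coordinatewise $\ell$-tuples of edges of $G_0$ (so $m=|E(G_0)|^\ell=N^{\Theta(\ell)}$ and $\log m=\Theta(\ell\log N)$), its label set has size $7^\ell$, which is $O(n)$ since $n\ge|V|^\ell\ge 7^\ell$, and the satisfying set of an edge is the product of those of its coordinates. Its value --- equivalently, the maximum fraction of its edges satisfiable by a labeling --- is still $1$ in the YES case, whereas in the NO case Raz's theorem, applied to the constant-alphabet game $G_0$ of value at most $1-\varepsilon_1<1$, gives $\operatorname{val}(G_0^{\otimes\ell})\le 2^{-\delta\ell}$ for an absolute constant $\delta=\delta(\varepsilon_1)>0$. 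Given $\varepsilon>0$, I would choose $\ell$ polylogarithmic in $N$ and large enough that $\delta\ell\ge\log^{1-\varepsilon}m$; since $\log m=\Theta(\ell\log N)$ this amounts to $\ell^{\varepsilon}=\Omega\bigl((\log N)^{1-\varepsilon}\bigr)$, which $\ell=(\log N)^{\Theta(1/\varepsilon)}$ meets. Then in the NO case at most $2^{-\delta\ell}m\le m/2^{\log^{1-\varepsilon}m}$ edges are simultaneously satisfiable, as required; and writing $G_0^{\otimes\ell}$ down --- its $N^{\Theta(\ell)}$ edges, each carrying a satisfying set of size at most $49^\ell$ --- takes $N^{\Theta(\ell)}=N^{\mathrm{polylog}(N)}$ time, so a polynomial-time algorithm for the {\sc LabelCover} gap problem would decide $3$SAT in quasi-polynomial time.

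The only genuinely hard input here is Raz's parallel repetition theorem, which I am using as a black box; everything else is bookkeeping. I expect the one point needing care to be the parameter trade-off in the last step --- making the gap $2^{\delta\ell}$ exceed $2^{\log^{1-\varepsilon}m}$ for every fixed $\varepsilon>0$ while keeping the $N^{\Theta(\ell)}$ blow-up quasi-polynomial --- together with the routine checks that parallel repetition preserves the bipartite label-cover format and that a game's value coincides with the maximum fraction of simultaneously satisfiable edges in the associated instance.
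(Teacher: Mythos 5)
The paper does not actually prove Theorem~\ref{t:labelCover}; it cites it as a standard consequence of the PCP theorem and Raz's parallel repetition theorem, and your argument (clause--variable game from gap-3SAT, then polylogarithmically many parallel repetitions, with the value-of-game equals fraction-of-satisfiable-edges identification) is exactly that standard derivation, so it matches the intended proof. The one caveat, which your own parameter accounting makes visible, is that for every fixed $\ve>0$ the reduction has size $N^{\mathrm{polylog}(N)}$ rather than $N^{\mathcal{O}(\log N)}$, so the hypothesis literally needed is NP $\not\subseteq$ DTIME($n^{\mathrm{polylog}(n)}$) --- a standard imprecision in how this theorem is quoted in the literature, not a gap in your argument.
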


{\bf Reduction:} Let an instance $G=(V,W,E)$ with label set $L=\{ 1,\ldots ,k\}$ as in Theorem \ref{t:labelCover} be given. Denote $E=\{(v_1,w_1)\ldots ,(v_m,w_m)\}$, where the ordering of the edges is chosen arbitrarily. Note that in our notation $v_i$, $v_j$ for $i\not= j$ may well refer to the same vertex (and the same is true for $w_i$, $w_j$). For ease of notation we denote by $R_i$ the satisfying label combinations for edge $(v_i,w_i)$.

\begin{wrapfigure}{r}{.3\textwidth}
\vspace{-5mm}
  \centering
      \epsfig{file=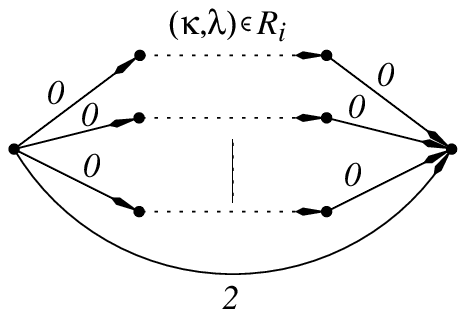,width=50mm}
  \caption{Gadget for an edge $(v_i,w_i)$ in the label cover instance. Each pricable edge corresponds to one satisfying label assignment $(\kappa ,\lambda )$ to vertices $v_i$, $w_i$.}
  \label{fig:gadget}
\vspace{-5mm}
\end{wrapfigure}

We create a {\sc StackSP} instance as follows. For every edge $(v_i,w_i)$ we construct a gadget as depicted in Fig. \ref{fig:gadget}. Essentially, the gadget consist of a set of parallel pricable edges, one for each satisfying label assignment $(\kappa ,\lambda )\in R_i$ and an additional parallel fixed-cost edge of price $2$.

These gadgets are joined together sequentially (see Fig. \ref{fig:reduction}). Let $i<j$ and consider two pricable edges corresponding to label assignments $(\kappa ,\lambda )\in R_i$ and $(\mu ,\nu )\in R_j$. We connect the endpoint of the first edge with the start point of the second edge with a {\em shortcut edge} of cost $j-i-1$, if the two label assignments are conflicting, i.e., if either $v_i=v_j$ and $\kappa \not= \mu$ or $w_i=w_j$ and $\lambda \not= \nu$. This construction is depicted in Fig. \ref{fig:reduction}. Finally, we define the first node in the gadget corresponding to edge $(v_1,w_1)$ and the last node in the gadget corresponding to $(v_m,w_m)$ as nodes $s$ and $t$ the consumer seeks to connect via a directed shortest path. We will refer to the gadgets by their indices $1,\ldots ,m$ and denote the pricable edge corresponding to label assignment $(\kappa ,\lambda)$ in gadget $i$ as $e_{i,\kappa ,\lambda }$.

{\bf Completeness:} Let $\ell$ be a label assignment satisfying all edges in $G$. We define a corresponding pricing $p$ by setting for every pricable edge $p(e_{i,\kappa ,\lambda })=2$ if $\ell (v_i)=\kappa$, $\ell (w_i)=\lambda$ and $p(e_{i,\kappa ,\lambda })=+\infty$ else.

The resulting shortest path from $s$ to $t$ cannot use any of the shortcut edges, because, as $\ell$ is a feasible label assignment, out of any two pricable edges corresponding to conflicting assignments, one must be priced at $+\infty$. Consequently, no path using a shortcut edge can have finite cost. On the other hand, since $\ell$ satisfies every edge, there is a pricable edge of cost $2$ in each of the gadgets. It is then w.l.o.g. to assume that the consumer purchases the shortest path using the maximum possible number of pricable edges and, hence, total revenue is $2m$.

{\bf Soundness:} Let $p$ be a given pricing resulting in overall revenue $m+c$ and let $P$ denote the shortest path purchased by the consumer given these prices. We will argue that there exists a label assignment $\ell$ satisfying $c/4$ of the edges in $G$.

First note that w.l.o.g. any pricable edge that is not part of path $P$ has price $+\infty$ under price assignment $p$. In particular, this means that in every gadget $i$ there is at most a single pricable edge with a finite price. We call this edge the {\em $P$-edge} of gadget $i$. We proceed by grouping gadgets into so-called {\em islands} as detailed below.

{\bf Islands:} Let $\sigma _1$ be the first gadget with a $P$-edge and call $\sigma _1$ the {\em start point} of an {\em island}. Now for each $\sigma _i$ find the maximum value of $j>\sigma _i$, such that gadget $j$ has a $P$-edge and there exists a shortcut edge between the $P$-edges of gadgets $\sigma _i$ and $j$. If such a $j$ exists, define $\sigma _{i+1}=j$, else call $\sigma _i$ an {\em end point}, let $k>\sigma _i$ be the minimum value such that gadget $k$ has a $P$-edge, define $\sigma _{i+1}=k$ and call $\sigma _{i+1}$ a start point. If no such $k$ exists, call $\sigma _i$ an end point and stop. Let $\sigma _r$ be the end point of the final island. We call $\sigma _1,\ldots ,\sigma _r$ the {\em significant gadgets}.

Note that by construction every gadget with a $P$-edge is covered by some {\em island}, i.e., the interval defined by some consecutive start and end points.

\begin{figure}
\centering
\epsfig{file=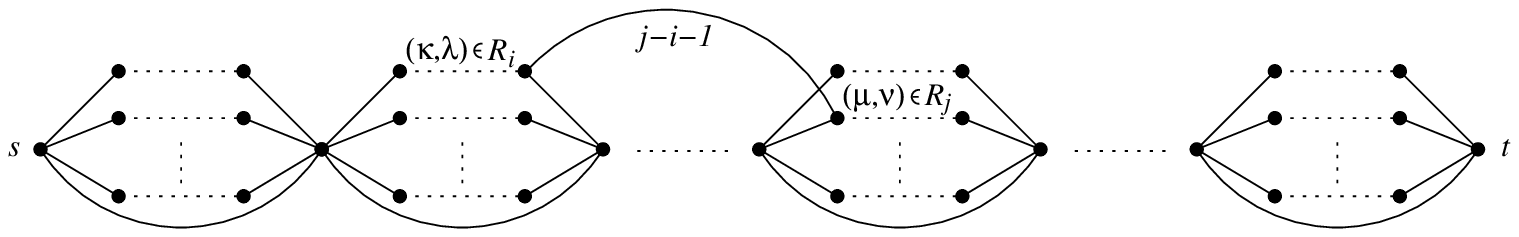,width=15cm}
\caption{\label{fig:reduction} Assembling the edge gadgets into a {\sc StackSP} instance. Conflicting label assignments on two edges $(v_i,w_i)$, $(v_j,w_j)$ are connected by a shortcut of length $j-i-1$. All edges are directed from left to right.}
\end{figure}

\begin{fact}
\label{fact1}
Consider an island $\sigma _{\alpha},\ldots ,\sigma _{\omega}$. Path $P$ does not enter gadget $\sigma _{\alpha}$ or exit gadget $\sigma _{\omega}$ via a shortcut edge.
\end{fact}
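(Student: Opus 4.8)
I need to recall the construction details. Each gadget has a set of parallel pricable edges plus a fixed-cost edge of cost $2$. Gadgets are joined sequentially. A shortcut edge between a pricable edge in gadget $i$ and a pricable edge in gadget $j$ (for $i < j$) has cost $j - i - 1$ and exists only between *conflicting* label assignments. The islands are defined so that consecutive significant gadgets $\sigma_i, \sigma_{i+1}$ within an island are connected by a shortcut between their $P$-edges, and $\sigma_{i+1}$ is chosen to be the *maximum* such $j$.

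The plan is to argue by a counting/cost comparison: if $P$ entered the start gadget $\sigma_\alpha$ of an island via a shortcut edge, that shortcut would originate from a $P$-edge of some earlier gadget $i < \sigma_\alpha$. But then — I should check against the definition of "start point" — $\sigma_\alpha$ was declared a start point precisely because either it is $\sigma_1$ (the very first gadget with a $P$-edge, so there is no earlier $P$-edge at all, hence no incoming shortcut possible), or because the previous significant gadget $\sigma_{\alpha-1}$ was an end point, meaning there is *no* shortcut between the $P$-edge of $\sigma_{\alpha-1}$ and any later gadget's $P$-edge. However, that alone does not forbid a shortcut from some gadget strictly between $\sigma_{\alpha-1}$ and $\sigma_\alpha$; the key observation is that no such gadget has a $P$-edge (by the choice of $\sigma_\alpha$ as the minimum index $k > \sigma_{\alpha-1}$ with a $P$-edge), so a shortcut into $\sigma_\alpha$'s $P$-edge would have to come from a $P$-edge lying in an *earlier island* or from $\sigma_{\alpha-1}$ itself — and the latter is excluded since $\sigma_{\alpha-1}$ is an end point.

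So the heart of the matter is ruling out a shortcut into $\sigma_\alpha$ that jumps *over* an end point from an earlier island. Here I would use a path-length argument: such a shortcut, say from the $P$-edge of gadget $a$ to the $P$-edge of $\sigma_\alpha$ with $a < \sigma_{\alpha-1} \le \sigma_\alpha$, has cost $\sigma_\alpha - a - 1$, and combined with the structure it would give the consumer a path that is at least as cheap while skipping the fixed-cost edges of all intermediate gadgets; I would compare this to the path the consumer actually purchases and derive that $P$ would not have used the $P$-edges of gadgets between $a$ and $\sigma_\alpha$, contradicting that those gadgets (in particular $\sigma_{\alpha-1}$) have $P$-edges on $P$. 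The symmetric argument with exit shortcuts and the end gadget $\sigma_\omega$ handles the second half: an exit shortcut from $\sigma_\omega$'s $P$-edge would, by definition of how $\sigma_\omega$ was declared an end point, have to land in a later island, and the same length comparison shows $P$ would then bypass the $P$-edges it is supposed to contain.

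The main obstacle I anticipate is pinning down exactly which path-cost inequality to invoke — specifically, making precise that using a shortcut that spans from island $A$ past an end point into island $B$ forces $P$ to omit some $P$-edge it provably contains, which requires carefully tracking how shortcut costs ($j-i-1$) telescope against the gadget traversal costs and the fixed-cost edges of value $2$. I expect this to reduce to: a path using such a long shortcut cannot simultaneously collect the revenue/structure attributed to the intermediate $P$-edges, so it contradicts the definition of $P$-edges as the finitely-priced edges actually on $P$.
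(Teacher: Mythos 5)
Your proposal has the right case structure (distinguish $\sigma_\alpha=\sigma_1$ from $\alpha>1$; within the latter, distinguish a shortcut from $\sigma_{\alpha-1}$ from one jumping over it), but it misses the key observation that makes the proof go through, and in its place you reach for a ``path-length / cost-comparison'' argument that is both unnecessary and never actually carried out. You identify this as ``the main obstacle I anticipate'' and leave it as an ``I expect this to reduce to\ldots'' --- that is a genuine gap, not a technicality.

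The point you are missing is that the argument is purely \emph{structural}, not about costs. $P$ is a directed path in an acyclic, left-to-right oriented graph. If $P$ uses a shortcut from gadget $a$ directly to gadget $\sigma_\alpha$ with $a < \sigma_{\alpha-1} < \sigma_\alpha$, then $P$ jumps over every gadget strictly between $a$ and $\sigma_\alpha$; in particular, $P$ never visits gadget $\sigma_{\alpha-1}$, so $\sigma_{\alpha-1}$ cannot have a $P$-edge. But $\sigma_{\alpha-1}$ is a significant gadget and does have a $P$-edge --- contradiction, with no inequality of the form $j-i-1$ versus $2(\cdot)$ needed. The only remaining possibility is $a=\sigma_{\alpha-1}$, but then there is a shortcut between the $P$-edges of $\sigma_{\alpha-1}$ and $\sigma_\alpha$, so $\sigma_{\alpha-1}$ would not have been declared an end point and $\sigma_\alpha$ would not have been declared a start point. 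Likewise, the exit half is a one-liner that your proposal overcomplicates: if $P$ leaves $\sigma_\omega$ by a shortcut into gadget $j>\sigma_\omega$, then $P$ uses a pricable edge of $j$, so $j$ has a $P$-edge joined by a shortcut to $\sigma_\omega$'s $P$-edge; that directly contradicts $\sigma_\omega$ having been declared an end point. There is no need to speak of ``landing in a later island'' (there may be no later island) or of any ``length comparison.''

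\end{document}
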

\begin{proof}
If $P$ exits $\sigma _{\omega}$ via a shortcut edge, then $\sigma _{\omega}$ could not have been declared an end point. If $\sigma _{\alpha}$ is entered via a shortcut edge, this shortcut must originate from a gadget $i<\sigma _{\alpha}$ which lies within the preceding island. As $P$ cannot bypass the endpoint of the preceding island via a shortcut, $i$ must in fact be the end point $\sigma _{\alpha -1}$ and so $\sigma _{\alpha}$ could not have become a start point.
\end{proof}

Consider now a single island $\sigma _{\alpha},\ldots ,\sigma _{\omega}$. By $\ell _i$ we denote the length of the shortcut edge between gadgets $\sigma _i$ and $\sigma _{i+1}$ for $\alpha \le i\le \omega -1$. Furthermore, by $in_i$ and $out_i$ we refer to the lengths of the shortcut edges used by path $P$ to enter and exit gadget $\sigma _i$, respectively, and set them to $0$ if no shortcuts are used. From Fact \ref{fact1} above it follows that $in_{\alpha}=out_{\omega}=in_{\alpha +1}=0$. See Fig. \ref{fig:island} for an illustration.

For $\alpha \le i\le \omega$ let the cost of path $P$ between shortcut edges $out_i$ and $in_{i+1}$ be $r_i+c_i$, where $r_i$ denotes the cost due to pricable edges and $c_i$ the cost due to fixed-cost edges, respectively. We are going to bound the expression $p_{\sigma _i}+r_i$. We note that $\ell _{\omega}=0$, since by the fact that gadget $\sigma _{\omega}$ is an endpoint, no shortcut edge connects its $P$-edge to the $P$-edge of another gadget. Similarly, we have $r_{\omega}=0$, since path $P$ does not use pricable edges between islands, as we have argued before.

Path $P$ crosses the end node of the $P$-edge in gadget $\sigma _i$ (node $v_2$ in Fig. \ref{fig:island}) and the start node of the $P$-edge of gadget $\sigma _{i+1}$ (node $v_4$ in Fig. \ref{fig:island}) for $\alpha \le i\le \omega -1$. The total cost of path $P$ between these two vertices is $out_i+r_i+c_i+in_{i+1}$. An alternative path $P_1$ is obtained by replacing this part of $P$ with the shortcut edge of length $\ell _i$ between $\sigma _i$ and $\sigma _{i+1}$. By the fact that $P$ is the shortest path we have $out_i+r_i+c_i+in_{i+1}\le \ell _i$ and, thus,
\begin{eqnarray}
\label{eqn1} r_i & \le & \ell _i -out_i-in_{i+1} \quad \quad \mbox{ for } \alpha \le i\le \omega ,
\end{eqnarray}
where the bound on $r_{\omega}$ follows from the fact that for $i=\omega$ all summands in the above expression are $0$. Similarly, the cost of path $P$ between the start node of the shortcut edge into gadget $\sigma _i$ (node $v_1$ in Fig. \ref{fig:island}) and the end node of the shortcut edge exiting $\sigma _i$ (node $v_3$ in Fig. \ref{fig:island}) is $in_i+p_{\sigma _i}+out_i$ for $\alpha \le i\le \omega$. We obtain an alternative path $P_2$ by taking only fixed cost edges of cost $2$ to bypass both shortcuts and gadget $\sigma _i$ at total cost $2(in_i+out_i+1)$. Again, since $P$ is the shortest path, we get $in_i+p_{\sigma _i}+out_i\le 2(in_i+out_i+1)$, or
\begin{eqnarray}
\label{eqn2} p_{\sigma _i} & \le & 2+in_i+out_i \quad \quad \mbox{ for } \alpha \le i\le \omega .
\end{eqnarray}
Combining (\ref{eqn1}) and (\ref{eqn2}) yields
\begin{eqnarray}
\label{eqn3} p_{\sigma _i}+r_i & \le & 2+\ell _i+in_i-in_{i+1} \quad \quad \mbox{ for } \alpha \le i\le \omega .
\end{eqnarray}
Finally, we have
\begin{eqnarray}
\sum _{i=\alpha}^{\omega} \Bigl( p_{\sigma _i}+r_i \Bigr) & \le & \sum _{i=\alpha}^{\omega} \Bigl( 2+\ell _i+in_i-in_{i+1} \Bigr) \\
 & = & 2(\omega -\alpha ) +\sum _{i=\alpha}^{\omega}\ell _i+in_{\alpha}-in_{\omega +1}=2(\omega -\alpha ) +\sum _{i=\alpha}^{\omega}\ell _i.
\end{eqnarray}

Recall that $\sigma _1,\ldots ,\sigma _r$ denote the significant gadgets across all islands. Assume now that there is a total number $I$ of islands with start and end points $\sigma _{\alpha (1)},\sigma _{\omega (1)},\ldots ,\sigma _{\alpha (I)},\sigma _{\omega (I)}$. Summing over all islands we get that overall revenue of price assignment $p$ is bounded by\[
\sum _{j=1}^I\sum _{i=\alpha (j)}^{\omega (j)}p_{\sigma _i}+r_i\le \sum _{j=1}^I\Bigl( 2(\omega (j)-\alpha (j)) +\sum _{i=\alpha (j)}^{\omega (j)}\ell _i\Bigr)\le 2(r-1)+m,\]
where the last inequality follows from the fact that $\alpha (j)=\omega (j-1)+1$ for $2\le j\le I$, $\omega (I)=r$ and $\sum _{i=1}^r\ell _i\le m$, since all shortcuts defining the $\ell _i$ are disjoint. Thus, we have $m+c\le 2(r-1)+m$, or $r\ge c/2+1$.

Now consider the $P$-edges of the $\lceil r/2\rceil$ gadgets $\sigma _1,\sigma _3,\sigma _5,\ldots$ and their corresponding label assignments $(\kappa _i,\lambda _i)$. By definition, there are no shortcut edges between the $P$-edges of any of these gadgets and, thus, $(\kappa _1,\lambda _1),(\kappa _3,\lambda _3),\ldots$ define a non-conflicting label assignment satisfying at least $\lceil r/2\rceil \ge c/4$ edges in $G$. (Labels not defined by $(\kappa _1,\lambda _1),(\kappa _3,\lambda _3),\ldots$ can be chosen arbitrarily.)

Finally, consider a label cover instance as in Theorem \ref{t:labelCover} and the path pricing instance resulting from our reduction above. If all edges can be satisfied, maximum path pricing revenue is $2m$. If no label assignment satisfies more than $m/2^{\log ^{1-\ve}m}$ edges, maximum path pricing revenue is bounded by $(1+4/2^{\log ^{1-\ve}m})m$. This finishes the proof of Theorem \ref{t:pathPricing}.

\begin{figure}
\centering
\epsfig{file=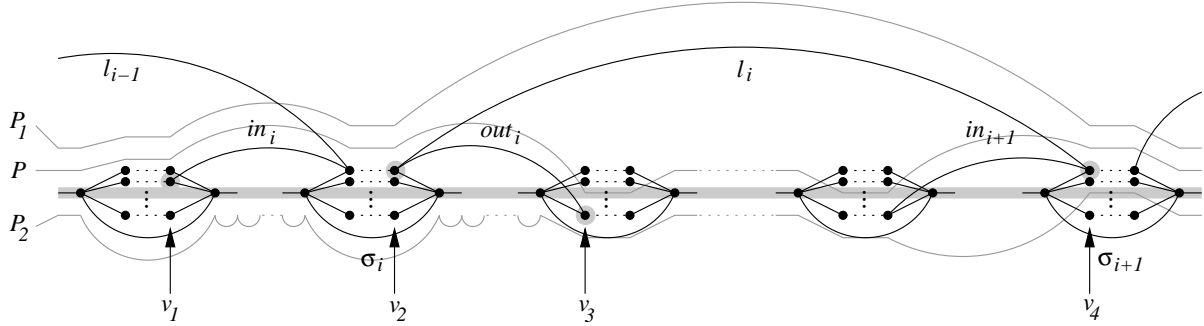,width=16cm}
\caption{\label{fig:island} Two consecutive significant gadgets $\sigma _i$, $\sigma _{i+1}$ inside one island. The length of the shortcut edges used to enter and exit gadget $\sigma _i$ (defined as $0$ if no such shortcut exists) are denoted as $in_i$ and $out_i$, respectively.}
\end{figure}

\section{Tightness}
\label{tightness}
We briefly mention that our analysis is tight in the following sense. It is easy to check that by assigning price $1$ to all pricable edges we can make sure that w.l.o.g. the shortest $s$-$t$-path uses a pricable edge in each of the gadgets and, thus, we obtain revenue $m$. Since maximum possible revenue is bounded above by $2m$ (there is an $s$-$t$-path of that cost that does not use any pricable edges), it follows that it is trivial to achieve approximation guarantee $2$ on the instances resulting from our reduction.

\section{Conclusions}
\label{conclusions}
We have proven the first explicit approximation threshold for any Stackelberg pricing problem. Still, the approximation threshold for this kind of problem in general - and the shortest path version in particular - is far from settled. The following questions seem to constitute fertile ground for future research:
\begin{itemize}
 \item Can we prove super-constant hardness of approximation results for any kind of Stackelberg pricing problem?
 \item Is it possible to achieve a better than logarithmic approximation guarantee for the Stackelberg shortest path pricing problem? Is there an interesting restricted set of graphs on which constant approximation factors are possible?
\end{itemize}

\end{document}